\theoremstyle{plain}
\newtheorem{theorem}{Theorem}
\theoremstyle{definition}
\journal{Systems and Control Letters}
\begin{document}

\begin{frontmatter}

\title{Optimizing the Network Topology of a Linear Reservoir Computer}

\author[a]{Sahand Tangerami}
\author[b]{Nicholas A. Mecholsky}
\author[c,d]{Francesco Sorrentino}

\address[a]{K. N. Toosi University of Technology, Tehran, Iran}
\address[b]{Catholic University of America, Washington, D.C., United States}
\address[c]{University of New Mexico, Albuquerque, NM 87131, United States}
\address[d]{Max Planck Institute for the Physics of Complex Systems, 01187 Dresden, Germany}

\begin{abstract}
Machine learning has become a fundamental approach for modeling, prediction, and control, enabling systems to learn from data and perform complex tasks. Reservoir computing is a machine learning tool that leverages high-dimensional dynamical systems to efficiently process temporal  data for prediction and observation tasks.
Traditionally, the connectivity of {the network that underlies a reservoir computer (RC)} is generated {randomly}, lacking a principled design. Here, we focus on optimizing the {connectivity} of a linear RC to improve its performance and interpretability, which we achieve by decoupling the RC dynamics into a number of independent modes. We then proceed to optimize each one of these modes to perform a given task, which corresponds to selecting an optimal RC connectivity in terms of a given set of eigenvalues of the RC adjacency matrix. Simulations on networks of varying sizes show that the optimized RC significantly outperforms randomly constructed reservoirs in both training and testing
phases and often surpasses nonlinear reservoirs of comparable size. 
This approach provides both practical performance advantages and theoretical guidelines for designing efficient, task-specific, and analytically transparent RC architectures. 
\end{abstract}

\begin{keyword}
Machine learning, networks, optimization, reservoir computer.

\end{keyword}

\end{frontmatter}

\section{Introduction} 
Machine learning has emerged as a key approach for modeling and control, allowing systems to extract patterns from data and perform complex tasks with high accuracy \cite{Zhao_2025, BENSOUSSAN2022531, Hashemi2023}. However, training in machine learning can be computationally demanding. Reservoir computing (RC), a machine learning tool within the broader class of recurrent neural networks, addresses this challenge by keeping the internal reservoir connectivity fixed and training only the readout layer \cite{jaeger2001echo}. This allows training to be performed as a simple regression operation, as opposed to the more computationally intensive back-propagation.
 RC has been shown to be
highly successful in processing temporal and sequential data, especially in prediction and observation tasks \cite{10659224, 10816480, PhysRevLett.120.024102, 9664461,del2021reservoir,nathe2023reservoir}.

In typical implementations, reservoir computers employ nonlinear node dynamics;  {moreover, network topology, i.e., node-to-node connectivity of the reservoir, is chosen  randomly} \cite{Yan2024}. 
Although this random construction often yields satisfactory results in many tasks, it lacks a principled foundation and is not explicitly optimized to perform  specific tasks of interest \cite{Dale2021}. {This points out to an inherent trade-off between performance and repurposability, i.e., one can either use a reservoir with a random topology and expect it to perform different tasks well; or can pick a specific topology that is optimized to perform a specific task, but not others. In the rest of this paper, we focus on optimizing the structure  of {an} RC to perform specific tasks of interest.}

Furthermore, the presence of nonlinearities complicates the analysis of RC internal dynamics, making it difficult to characterize key properties such as memory capacity, stability, and input-output relationships  \cite{10.1162/neco_a_01770, 10221724}. Consequently, nonlinear RCs often function as black boxes, providing limited insight into how information is processed or stored. {Alternatively, one may opt for a linear reservoir computer (LRC), i.e., an RC whose dynamics is linear, which can significantly improve interpretability and analytical tractability \cite{Gauthier2021}.} 
Although linear reservoirs may exhibit lower performance in general, they can be highly effective for tasks with well-understood linear or frequency-driven structures, offering a valuable platform for developing theory-guided, efficient, and generalizable RC systems. 
Our approach in this paper adopts linear reservoir dynamics, which we will show allows systematic optimization of the network topology to enhance performance on specific tasks of interest.

In what follows, we consider an LRC performing an observation task, in which the reservoir is trained to reconstruct an output signal from knowledge of an input signal  \cite{10.1063/1.4979665}.
Our objective is to optimize the network topology of the LRC, based on knowledge of the spectral characteristics of the input and output signals. Specifically, we formulate an optimization problem to determine the optimal set of eigenvalues of the adjacency matrix that provides the connectivity between reservoir nodes, which we show leads to a drastic reduction of the training {and testing errors}. By leveraging frequency-domain insights, we identify reservoir topologies that maximize performance and interpretability.
 Our results show that the optimized LRC not only outperforms randomly constructed LRC{s} but also often achieves better performance than nonlinear RCs of comparable size.
\section{System Model and Reservoir Dynamics}
%

We take two signals which we know to be dynamically related to one another, $\hat{u}(t)$ and $\hat{y}(t)$. Fundamentally, our goal is to `observe' or `learn'  $\hat{y}(t)$ from knowledge of $\hat{u}(t)$. 
To this end, we extract a set of $K$ dominant frequencies that are simultaneously present in both $\hat{y}(t)$ and $\hat{u}(t)$, and construct the signals $y(t)$ and $u(t)$ that only contain those frequencies \cite{Butschek:22}. Hence, $u(t)$ and $y(t)$ can be expressed as:
\begin{align}\label{input_signal}
    u(t) &= \sum\limits_{k=1}^K a_k \cos\left(\omega_kt\right)\\
\label{output_signal}
  		y(t)&= \sum\limits_{k=1}^K b_k \cos\left(\omega_kt+\phi_k\right).
\end{align}
where $\{\omega_k\}$ is a set of distinct frequencies, $a_k$ and $b_k$ are amplitudes, and $\phi_k$ are the phases of the training signal with respect to the input signal.
We then opt for the simpler problem of learning the training signal  ${y}(t)$ from knowledge of the input signal ${u}(t)$. 

To this end, we introduce {an LRC} with $N$ nodes, which obeys the dynamics,
\begin{equation}\label{coupled_RC}
    \mathbf{\dot r}(t) = \gamma\left[ -  \mathbf{r}(t) + A \mathbf{r}(t) + \mathbf{d} u(t)\right],
\end{equation}
where   $\mathbf{r}(t)=[r_1(t),r_2(t),...,r_N(t)]$ is the $N$-dimensional vector of the nodes' {states},  $\gamma $ is a positive constant, and  $ A$ is the square $N$-dimensional reservoir's adjacency matrix,  i.e., { $A_{ij}$ describes the strength of the coupling from node $j$ to node $i$, the mask vector $\mathbf{d}=[d_1,d_2,...,d_N]$, and $u(t)$ is the input signal. We sometimes simply refer to the number of nodes $N$ as the reservoir `size'. We introduce two assumptions: (i) the matrix $A$ is diagonalizable and (ii) the matrix $(A-I)$ is Hurwitz. As a result of (ii) and under the generic assumption of bounded inputs $\mathbf{u}$, the reservoir dynamics \eqref{coupled_RC} is stable.}

The nodes states and the training signal are discretized with a time step $\tau$ and collected in the {state} matrix $\Omega_r$ and in the training vector $\mathbf{y}$, respectively,
\begin{equation}
\Omega_r=
\left(\begin{array}{ccccc}
    r_1(\tau) & r_2(\tau) & ... & r_N(\tau) & 1\\
    r_1(2\tau) & r_2(2\tau) & ... & r_N(2\tau) & 1\\
    \vdots & \vdots & \vdots &  \vdots  & \\
    r_1(T\tau) & r_2(T\tau) & ... & r_N(T\tau) & 1\\
\end{array}\right), 
\quad \quad \mathbf{y}=\left[\begin{array}{c} y(\tau) \\ y(2 \tau) \\ ...\\ y(T \tau) \end{array} \right],
\end{equation}
where $T$ is the number of time steps. {Throughout this paper $\tau$ is fixed and equal $0.01$; therefore,  the duration of the training phase is completely determined by the number of steps $T$.}

 We then compute an approximation of the training signal, the best-fit signal $\mathbf{{y}}^{\text{fit}}$, as a linear combination of the columns of the {state} matrix $\Omega_r$ in the coefficients of the output weights of the reservoir, which we carefully choose;  that is, we seek  the best-fit signal in the form $\mathbf{{y}}^{\text{fit}}=\Omega_r \pmb{\kappa}^*$, where here $\pmb{\kappa}^*$ is the vector of the output weights that minimize the  quantity 
$e(\pmb{\kappa} )= ||\Omega_r \pmb{\kappa} -\mathbf{y}||$ and
 $||.||$ denotes the Euclidean norm; that is, we compute the optimal output weights $\pmb{\kappa}^*= \operatorname*{arg\,min}_{\pmb{\kappa}}   e(\pmb{\kappa})$, which corresponds to setting
  $\pmb{\kappa}^*=  \Omega_r^{\dagger}\mathbf{y}$, where
${\Omega_r}^{\dagger} = \left(\Omega_r^T \Omega_r + \beta I \right)^{-1}\Omega_r^T$ and $\beta\geq 0$ is the ridge regression parameter  \cite{10.1007/978-3-540-87536-9_83}. With this prescription, we calculate the training error
\begin{equation} \label{norm_eror}
    \epsilon_r= \frac{1}{\sqrt{T}} ||\Omega_r \pmb{\kappa}^* -\mathbf{y}||.
\end{equation}

{Once we have learned the optimal output weights $\pmb{\kappa}^*$, we can also compute the testing error \cite{lukosevicius2009reservoir},
\begin{equation} \label{norm_eror}
    \check{\epsilon}_r= \frac{1}{\sqrt{T}} ||\check{\Omega}_r \pmb{\kappa}^* -\check{\mathbf{y}}||,
\end{equation}
where $\check{\Omega}_r$ and $\check{\mathbf{y}}$ are computed in the exact same way as ${\Omega}_r$ and ${\mathbf{y}}$, but over a different interval of time. In the rest of this paper, the duration of the testing phase is taken to be one third of that of the training phase, i.e, the number of samples during the testing phase is equal to $\lceil T/3 \rceil$ and the symbol $\lceil \cdot \rceil$ indicates ceil-rounding to the closest integer.}

To facilitate optimization of the linear reservoir, here we introduce a modal decomposition to transform the reservoir dynamics into independent modes that can be individually tuned.  {Since the matrix $A$ is diagonalizable,} 
it can be rewritten in the form $A = V\Lambda V^{-1}$, where $V$ is 
the matrix whose columns are the eigenvectors of $A$ and $\Lambda$ is a diagonal matrix with the {possibly complex} eigenvalues of $A$ along its main diagonal. We pre-multiply both sides of Eq.\ \eqref{coupled_RC} by $V^{-1}$, and by defining the vectors $\mathbf{q}(t) = V^{-1}\mathbf r(t)$ and $\mathbf{c} = V^{-1}\mathbf d$, the system transforms into the following form, 
\begin{figure} 
    \centering
    \includegraphics[width=0.5\linewidth]{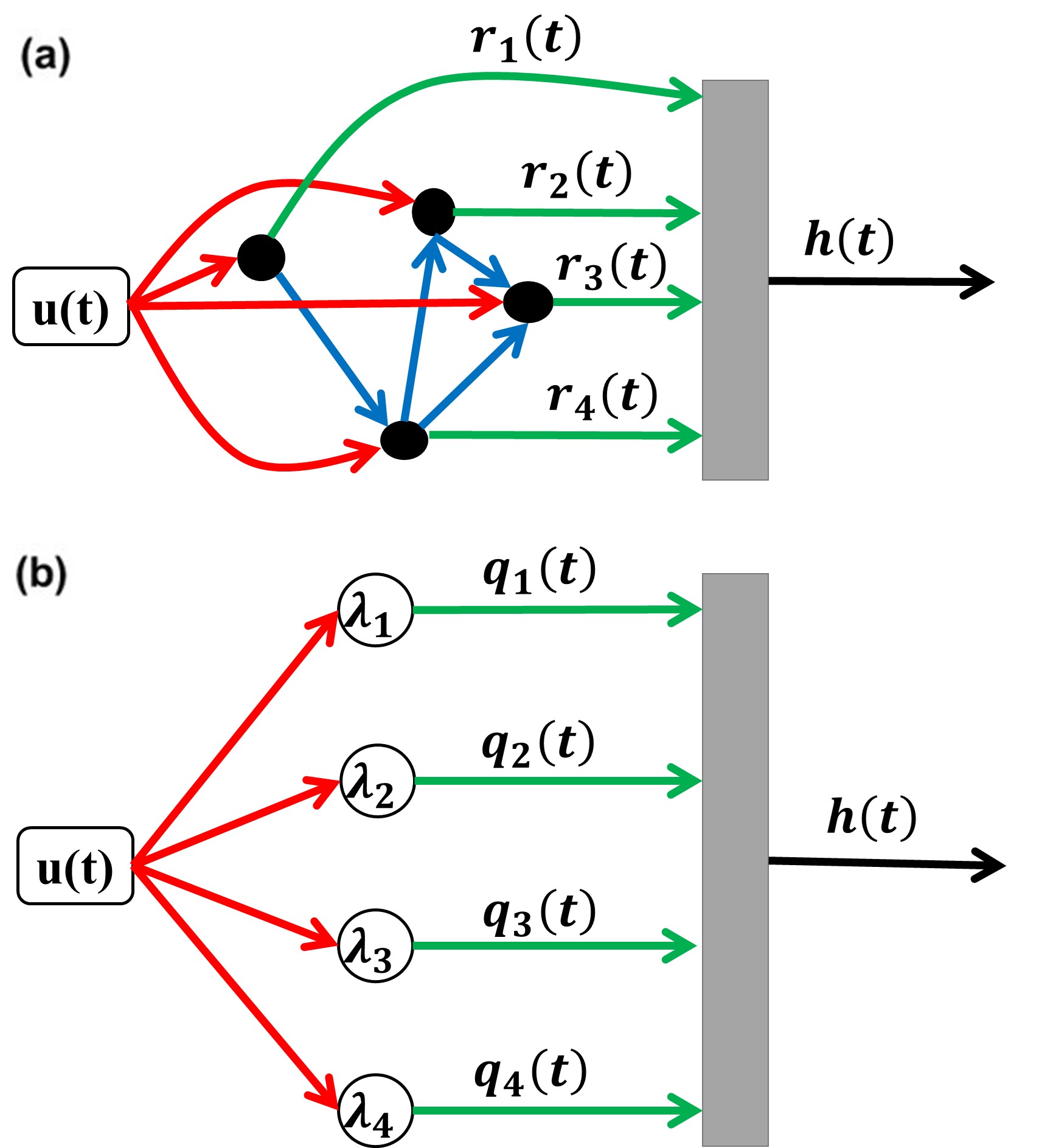}
    \caption{Two equivalent RC network topologies. a) A coupled reservoir computer. b) A decoupled reservoir computer.}\label{dec_coup_RC}
\end{figure}
\begin{equation}
\label{decoupled_RC}
    \mathbf{\dot q}(t) = \gamma\left[ - \mathbf{q}(t) + \Lambda \mathbf{q}(t) + \mathbf{c} u(t)\right],
\end{equation}
where $\mathbf{q}(t)$ is the $N$-dimensional vector of the decoupled reservoir {states}. Equation \eqref{decoupled_RC}  breaks up into a set of $N$ independent modes, 
\begin{equation}\label{individual reservior}
    {\dot q}_i(t) = \gamma(\lambda_i -1)q_i (t) + \gamma c_i u(t), \quad i=1,...,N,
\end{equation}
where $\lambda_i$ are the eigenvalues of the reservoir's adjacency matrix $A$. {In general, both the modes $q_i(t)$ and the eigenvalues $\lambda_i$ can be complex, but in the case in which $A$ is symmetric, they are real. }

Equations \eqref{decoupled_RC} and \eqref{individual reservior} provide a modal decomposition for the reservoir dynamics.
The implications of the modal decomposition of an RC are significant and are illustrated in Fig.\ \ref{dec_coup_RC}. Panel (a) of this figure shows a coupled reservoir, where nodes are interconnected, whereas panel (b) illustrates a decoupled reservoir computer, in which the reservoir is decomposed into independent modes without node-to-node connections. In particular, the fact that an RC can be decomposed into a number of independent modes (see Fig.\ \ref{dec_coup_RC}) has important implications and arguably several advantages  in terms of  RC robustness, scalability, and modularity.  {Next we briefly comment on each one of these advantages. Robustness arises from the fact that intentional or accidental removal of a single node (or of a set of the nodes) will not propagate to the others. Scalability is achieved since increasing the dimension of the reservoir simply corresponds to adding more individual nodes. This may have a considerable advantage in experimental setting, as it removes the need to wire different nodes in a complicated network structure. Finally, modularity is achieved when a complex structure is replaced by smaller, manageable, and interchangeable components.}
Another fundamental advantage is the possibility of more easily optimizing the structure of a reservoir, as it reduces the parameter space over which the optimization is performed from $N^2$ entries of the matrix $A$ to only $N$ eigenvalues. This is specifically what we do in this paper.

 The solution of Eq.\ \eqref{individual reservior}, is given by,
\begin{equation} \label{qq}
q_i(t) = \exp\Bigl( \gamma(\lambda_i - 1) t \Bigr) q_i(0) + \gamma c_i \int_0^t \exp\Bigl( \gamma (\lambda_i - 1)(t - \tau) \Bigr) u(\tau) \, d\tau
\end{equation}
where the first term on the left-hand side of \eqref{qq} is the free evolution, which by the assumption of stable reservoir dynamics goes to zero for large $t$. For large $t$, each mode $q_i$ differs from the others through the coefficient $\lambda_i$, while the particular modal amplitude is given by $c_i$. 

Similar to what done before, we can construct the {state} matrix for the decoupled reservoir,
\begin{equation} \label{omegaq}
\Omega_q=
\left(\begin{array}{ccccc}
    q_1(\tau) & q_2(\tau) & ... & q_N(\tau) & 1\\
    q_1(2\tau) & q_2(2\tau) & ... & q_N(2\tau) & 1\\
    \vdots & \vdots & \vdots &  \vdots  & \\
    q_1(T\tau) & q_2(T\tau) & ... & q_N(T\tau) & 1\\
\end{array}\right)
\end{equation}
and compute the error
\begin{equation} \label{es_errq}
    \epsilon_q = \frac{1}{\sqrt{T}} ||\Omega_q \pmb{\kappa}_q^* -\mathbf{y}||,
\end{equation}
where 
$\pmb{\kappa}_q^*= \Omega_q^{\dagger}\mathbf{y}$.

The decoupled reservoir is easier to handle since it consists of independent rather than dependent modes. An important step is to understand the relationship between the error $\epsilon_r$ for the coupled reservoir of Eqs.\ \eqref{coupled_RC}  and the error $\epsilon_q$ for the decoupled reservoir of Eqs.\ \eqref{decoupled_RC}. This relationship is formally established in Theorem 1.

\begin{theorem}
{Assume the matrix $A$ is diagonalizable.} Let a coupled linear reservoir and a decoupled linear reservoir be described by Eqs.\ \eqref{coupled_RC} and \eqref{decoupled_RC}, respectively. Then, if either $A$ is normal or $\beta=0$, we have that 
$\epsilon_q=\epsilon_r$.
\end{theorem}
\begin{proof}[Proof]
We construct the matrix 
\[
W =
\begin{bmatrix}
V & \mathbf{0}_{N\times 1} \\
\mathbf{0}_{1\times N} & 1
\end{bmatrix},
\] where $V$ is a generic invertible matrix $V$. The inverse of $W$ is equal to
\[
W^{-1} =
\begin{bmatrix}
V^{-1} & \mathbf{0}_{N\times 1} \\
\mathbf{0}_{1\times N} & 1
\end{bmatrix}.
\]
We recall that  $\mathbf{q}(t) = V^{-1}\mathbf r(t)$. Thus we have $\Omega_q=\Omega_r W^{-1}$ and $\Omega_q^T= {W^{-1}}^T \Omega_r^T$. By plugging in the expression for the optimal $\pmb{\kappa}_q^*=\Omega_q^\dagger \textbf{y}$ in Eq. (11), we obtain 
\begin{equation}
\begin{split}
  {\sqrt{T}}  \epsilon_q &=  ||(\Omega_q (\Omega_q^T \Omega_q + \beta I )^{-1} \Omega_q^T - I)\mathbf{y}|| \\
    &= ||(\Omega_r W^{-1} ({W^{-1}}^T\Omega_r^T \Omega_r W^{-1}+ \beta I)^{-1} {W^{-1}}^{T} \Omega_r^T - I)\mathbf{y}|| \\
    &= ||(\Omega_r W^{-1} W (\Omega_r^T \Omega_r + \beta I)^{-1} W^{T} {W^{-1}}^{T} \Omega_r^T - I)\mathbf{y}||\\
    &= ||(\Omega_r (\Omega_r^T \Omega_r + \beta W^T W)^{-1}   \Omega_r^T - I)\mathbf{y}|| \\
    &=\sqrt{T} ||(\Omega_r \Omega_r^{\dagger \beta,W}-I) \mathbf{y} ||, 
    \end{split}
\end{equation}
where the operation $\Omega_r^{\dagger \beta,W}=(\Omega_r^T \Omega_r + \beta W^T W)^{-1}   \Omega_r^T $ is similar to a ridge regression but with the ridge regression parameter $\beta$ replaced by the term $\beta W^T W$, which also depends on the particular choice of the matrix $W$.

In general we can conclude that $\epsilon_q=\epsilon_r^W$, where the error $\epsilon_r^W$ is computed by performing the alternative regression that uses $\Omega_r^{\dagger W}$ in place of $\Omega_r^{\dagger}$.
We also see that there are two cases under which $\epsilon_q=\epsilon_r$: (i) when the matrix $A$ is normal, which implies $W^T W=I$ and (ii) when $\beta=0$. Also, for small $\beta$, $\epsilon_q \simeq \epsilon_r$.

\end{proof}
In the rest of this paper we focus on the decoupled reservoir of Eqs.\ \eqref{decoupled_RC} since it is simpler to work with and it yields the same training and testing errors as the coupled reservoir (Theorem 1.)
Notwithstanding the benefits of the decoupled reservoir, the dimension of the {state} matrix remains large, as it scales with the number of time steps. This dimension can be significantly reduced by transforming the problem into the frequency domain, where the matrix size depends only on the number of input frequencies $K$, which in practice is considerably smaller than the number of time steps $T$. To this end, we reformulate the problem in the frequency domain by applying the Laplace transform to each mode and expressing it through its corresponding transfer function. 

{In what follows, proceed under the assumption that the matrix $A$ has a real spectrum and that the modes $q_i(t)$ are real.}
The transfer function of each mode in Eq.\ \eqref{qq} can be expressed as follows,

\begin{equation} \label{transfer_f}
  		T_i(j\omega)=\frac{\gamma c_i}{j\omega-\gamma(\lambda_i - 1)}.
\end{equation}
%
\subsection{Generic Relation Between the Error In The Time and the Frequency Domain}
If we remove the transient component, which corresponds to the first term on the right-hand side of Eq.\ \eqref{qq}, the time-domain solution for the dynamics of each node in the decoupled reservoir is given by: 
\begin{equation} \label{larget}
  		q_i(t)= \sum\limits_{k=1}^K a_kM_{ik}\cos\left(\omega_kt+\theta_{ik}\right)
\end{equation}
where $a_k$ denotes the amplitude of the input signal,  $\omega_k$ is the common  frequency between the input and output signals,  $M_{ik}$ and $\theta_{ik}$ represent the magnitude and phase of the reservoir, respectively, which are computed as follows: 
\begin{align}  
  		M_{ik}&= \left|\frac{\gamma c_i}{j\omega_k-\gamma(\lambda_i - 1)}\right|=\frac{\gamma |c_i|}{\sqrt{\omega_k^2+\gamma^2(\lambda_i - 1)^2}}\nonumber\\
        \label{mag_phase}
  		\angle{\theta_{ik}} &= \angle \frac{\gamma c_i}{j\omega_k-\gamma(\lambda_i - 1)}= \tan^{-1}\left(\frac{\omega_k}{-\gamma(\lambda_i-1)}\right)
\end{align}
%

Next, we take the Laplace transform of  Eq.\ \eqref{larget} and perform an alternative regression in the frequency domain. 
We note that the expression of Eq.~(\ref{larget}) can be interpreted as a linear combination of a set of basis functions, analogous to the representation of a vector in terms of its components. In this context, the basis functions—or \textit{primitives}—are given by  $\left\{ \cos(\omega_1t+\theta_{11}),\cos(\omega_2t+\theta_{12}),\ldots, \cos(\omega_Kt+\theta_{1K}) \right\}$ and the corresponding coefficients for  $q_1(t)$ are $\left\{ a_1M_{11}, a_2M_{12}, \ldots, a_KM_{1K} \right\}.$

The output of the reservoir computer is constructed as a linear combination of the internal states. 
By taking the Laplace transform of Eq. (\ \ref{larget}), we obtain,
\begin{equation} \label{s1}
\begin{aligned}
  		Q_i(s)&= \sum\limits_{k=1}^K a_k M_{ik} \Bigl[\cos(\theta_{ik}) \frac{s}{s^2+\omega_k^2}- \sin(\theta_{ik}) \frac{\omega_k}{s^2+\omega_k^2}\Bigr]
	  \end{aligned}
	\end{equation}
    
Thus the linear combination of $Q_1(s)$, $Q_2(s)$, $\hdots$, $Q_N(s)$ is equal to,
\begin{equation} \label{s2}
\begin{aligned}
    \sum\limits_{i=1}^N \tilde{\kappa}_i Q_i(s)&= \sum\limits_{k=1}^K \sum\limits_{i=1}^N a_k\Bigl[\tilde{\kappa} M_{ik} \cos(\theta_{ik}) \frac{s}{s^2+\omega_k^2} \Bigr]-  \sum\limits_{k=1}^K \sum\limits_{i=1}^N a_k\Bigl[ \tilde{\kappa}_i M_{ik} \sin(\theta_{ik}) \frac{\omega_k}{s^2+\omega_k^2}\Bigr]\\
    &=\sum\limits_{k=1}^K \sum\limits_{i=1}^N a_k\Bigl[ \tilde{\kappa}_i M_{ik} \cos(\theta_{ik}) \frac{s}{s^2+\omega_k^2}-   \tilde{\kappa}_i M_{ik} \sin(\theta_{ik}) \frac{\omega_k}{s^2+\omega_k^2}\Bigr]
    \end{aligned}
\end{equation}
At the same time, the Laplace transform of the output signal in Eq. (\ \ref{output_signal}) is, 
\begin{equation} \label{s3}
\begin{aligned}
  		Y(s)&= \sum\limits_{k=1}^K b_k\Bigl[\cos(\phi_{k}) \frac{s}{s^2+\omega_k^2} - \sin(\phi_{k}) \frac{\omega_k}{s^2 +\omega_k^2}\Bigr]
	  \end{aligned}
	\end{equation}

Thus if the frequencies are all nonzero and different from each other, the primitives will be linearly independent and the only way two linear combinations would be the same is if all of the coefficients are equal. 

The coefficients $\tilde{\kappa}_i$ that minimize the error $\tilde{\epsilon}_q=||\sum_i \tilde{\kappa}_i Q_i(s) - Y(s) ||$ between the two functions in  Eq.\ (\ref{s2}) and Eq.\ (\ref{s3}) are,
\begin{equation}
\pmb{\tilde\kappa}^*=\tilde{\Omega}_q^\dagger  \mathbf{B},
\end{equation}
%
%
where $\tilde{\Omega}_q$ is given by 
\begin{equation}
\tilde{\Omega}_q =
\left(
\begin{array}{cccc}
a_1 M_{11} \cos(\theta_{11}) & a_1 M_{21} \cos(\theta_{21}) & \cdots & a_1 M_{N1} \cos(\theta_{N1})\\
a_1 M_{11} \sin(\theta_{11}) & a_1 M_{21} \sin(\theta_{21}) & \cdots & a_1 M_{N1} \sin(\theta_{N1})\\
a_2 M_{12} \cos(\theta_{12}) & a_2 M_{22} \cos(\theta_{22}) & \cdots & a_2 M_{N2} \cos(\theta_{N2})\\
a_2 M_{12} \sin(\theta_{12}) & a_2 M_{22} \sin(\theta_{22}) & \cdots  & a_2 M_{N2} \sin(\theta_{N2})\\
\vdots & \vdots & \ddots & \vdots \\
a_K M_{1K} \cos(\theta_{1K}) & a_K M_{2K} \cos(\theta_{2K}) & \cdots & a_K M_{NK} \cos(\theta_{NK})\\
a_K M_{1K} \sin(\theta_{1K}) & a_K M_{2K} \sin(\theta_{2K}) & \cdots & a_K M_{NK} \sin(\theta_{NK})
\end{array}
\right)
\label{matrix_rep}
\end{equation}
and 
%
\begin{equation}
\begin{array}{cc}
\boldsymbol{\tilde{\kappa}} = 
\begin{pmatrix}
\tilde{\kappa}_1 \\
\tilde{\kappa}_2 \\
\vdots \\
\tilde{\kappa}_N
\end{pmatrix}
\quad\quad
&
\mathbf{B} =
\begin{pmatrix}
b_1 \cos(\phi_1) \\
b_1 \sin(\phi_1) \\
b_2 \cos(\phi_2) \\
b_2 \sin(\phi_2) \\
\vdots \\
b_K \cos(\phi_K) \\
b_K \sin(\phi_K)
\end{pmatrix}
\end{array}.
\end{equation}

The main benefits of computing $\pmb{\tilde\kappa}^*$, as opposed to computing $\pmb\kappa^*$, is that the matrix 
$\tilde{\Omega}_q$ in  Eq.\ \eqref{matrix_rep} has a number of rows equal to 2$K$, where $K$ is the number of frequencies, which is typically much lower than the number of rows $T$ of the matrix 
${\Omega}_q$ in Eq.\    \eqref{omegaq}.
This results in lower computational cost, a simplified interpretation and analysis, and more efficient calculations. While especially useful for periodic or narrowband inputs, an essential step is to understand the relationship between errors in the time and frequency domains, due to the inherently temporal nature of learning in reservoirs. Theorem 2 explicitly establishes this relationship.\color{black}

\begin{theorem}
Let $\Omega_q \in \mathbb{R}^{T \times N}$ be the  {state} matrix of a {linear} decoupled reservoir computer, and let $\pmb{y} \in \mathbb{R}^T$ be the corresponding training signal in the time domain. Define the time-domain readout weights $\pmb \kappa \in \mathbb{R}^N$ as:
        \begin{equation*}  
    {\pmb \kappa} = \Omega_q^{\dagger} {\pmb y},
    \end{equation*}
where as specified before, $\Omega_q^{\dagger}={\left(\Omega_q^T \Omega_q + \beta I \right)^{-1}\Omega_q^T}$. Let $\widetilde{\Omega}_q$ and $ \tilde{\pmb y}$ be the discrete Fourier transforms (DFT) of $\Omega_q$ and $\pmb{y}$, respectively. Then, the frequency-domain readout weights, 
    \begin{equation*} \label{kappa_rep}
\tilde{\pmb \kappa} = \widetilde{\Omega}_q^{\dagger} { \tilde{\pmb y}}
\end{equation*}
are {approximately} equal to the time-domain readout weights, i.e.,
    \begin{equation*} 
\pmb\kappa \approx \tilde{\pmb \kappa}.
\end{equation*}
and consequently, the regression error is preserved in the two domains, i.e., $\tilde{\epsilon}_q \approx \epsilon_q.$ 
\end{theorem}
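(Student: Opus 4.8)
The plan is to recognize the discrete Fourier transform as a (scaled) unitary operator acting on the time index, and then to exploit the invariance of ridge regression under orthogonal transformations of the sample space. Concretely, I would write the DFT as a matrix $\mathcal{F}$ so that $\widetilde{\Omega}_q = \mathcal{F}\Omega_q$ (applied column by column) and $\tilde{\pmb y} = \mathcal{F}\pmb y$, where $\mathcal{F}$ maps the $T$ time samples of each signal to its frequency content as represented by the $2K$ real cosine/sine rows of Eq.\ \eqref{matrix_rep}. The single property I would lean on is Parseval/Plancherel, namely $\mathcal{F}^{*}\mathcal{F} = I$ in the appropriately normalized (unitary) convention, which states that $\mathcal{F}$ preserves inner products.

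First I would substitute $\widetilde{\Omega}_q = \mathcal{F}\Omega_q$ and $\tilde{\pmb y} = \mathcal{F}\pmb y$ into the frequency-domain ridge formula $\tilde{\pmb\kappa} = (\widetilde{\Omega}_q^{*}\widetilde{\Omega}_q + \beta I)^{-1}\widetilde{\Omega}_q^{*}\tilde{\pmb y}$, where $*$ is the conjugate transpose required in the complex domain. Using $\Omega_q^{*} = \Omega_q^{T}$ (the state matrix is real) together with $\mathcal{F}^{*}\mathcal{F} = I$, the two Gram-type quantities collapse: $\widetilde{\Omega}_q^{*}\widetilde{\Omega}_q = \Omega_q^{T}\mathcal{F}^{*}\mathcal{F}\Omega_q = \Omega_q^{T}\Omega_q$ and $\widetilde{\Omega}_q^{*}\tilde{\pmb y} = \Omega_q^{T}\mathcal{F}^{*}\mathcal{F}\pmb y = \Omega_q^{T}\pmb y$. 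Crucially, the regularizer $\beta I$ lives on the parameter side and is untouched by $\mathcal{F}$, so substituting back gives $\tilde{\pmb\kappa} = (\Omega_q^{T}\Omega_q + \beta I)^{-1}\Omega_q^{T}\pmb y = \pmb\kappa$. For the residuals I would apply the same isometry, $\|\widetilde{\Omega}_q\tilde{\pmb\kappa} - \tilde{\pmb y}\| = \|\mathcal{F}(\Omega_q\pmb\kappa - \pmb y)\| = \|\Omega_q\pmb\kappa - \pmb y\|$, so that $\tilde{\epsilon}_q \approx \epsilon_q$ once the $1/\sqrt{T}$ factor in the definition of $\epsilon_q$ is matched against the DFT normalization.

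The step I expect to be the main obstacle is justifying why the identity is stated as approximate rather than exact, and carrying the real/complex and normalization bookkeeping cleanly. The matrix $\widetilde{\Omega}_q$ of Eq.\ \eqref{matrix_rep} is not the full $T$-point transform but its restriction to the $K$ retained frequencies, so $\mathcal{F}^{*}\mathcal{F} = I$ holds as an exact isometry only on the subspace spanned by $\{\cos(\omega_k t),\sin(\omega_k t)\}_{k=1}^{K}$; the argument above therefore applies to $\Omega_q$ and $\pmb y$ precisely to the extent that their columns lie in this band-limited subspace. By the construction of Eqs.\ \eqref{larget} and \eqref{output_signal}, each mode $q_i(t)$ and the target $y(t)$ are exact finite sums over these $K$ frequencies once the transient is discarded, so at the continuous-time level the projection loses no energy and the equality is exact. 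I would then isolate the two sources of the residual ``$\approx$'': the finite-sample spectral leakage incurred by sampling the $K$ frequencies on a length-$T$ DFT grid, and the effective rescaling of $\beta$ by the Parseval constant whenever an unnormalized transform is used, arguing that each vanishes as the sampling resolves the $K$ frequencies and as the unitary normalization is adopted.
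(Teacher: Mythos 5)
Your proof is correct and rests on the same underlying principle as the paper's --- Parseval/unitarity of the Fourier transform --- but it executes that principle in a genuinely different and more complete way. The paper's Appendix II writes $q_i(t)$ and $y(t)$ as delta-function combs in the frequency domain, sets up a coefficient-matching system at each $\pm\omega_k$, solves it by ridge regression to get $\tilde{\pmb{\kappa}}^* = \widetilde{\Omega}_q^{\dagger}\tilde{\pmb{y}}$, and then simply \emph{asserts} that because the transform is ``linear, invertible, and energy-preserving'' the two solutions must coincide; the equality of the normal equations is never actually computed. You instead realize the passage to the frequency domain as left-multiplication by a matrix $\mathcal{F}$ and conjugate the ridge normal equations, showing explicitly that $\widetilde{\Omega}_q^{*}\widetilde{\Omega}_q = \Omega_q^{T}\Omega_q$ and $\widetilde{\Omega}_q^{*}\tilde{\pmb{y}} = \Omega_q^{T}\pmb{y}$, which is precisely the step the paper skips. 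You also correctly identify that $\mathcal{F}$ here is not the full $T$-point DFT but its restriction to the $2K$ cosine/sine rows of Eq.\ \eqref{matrix_rep}, so the isometry holds only on the band-limited subspace --- harmless because, by Eqs.\ \eqref{larget} and \eqref{output_signal}, the post-transient states and target lie exactly in that subspace --- and you pin down the two concrete sources of the ``$\approx$'' (finite-window spectral leakage when the $\omega_k$ are not commensurate with $T\tau$, and the rescaling of $\beta$ by the Parseval constant $T/2$ under the unnormalized convention), neither of which the paper makes explicit. One small point worth adding for completeness: the rows of $\widetilde{\Omega}_q$ carry $+\sin(\theta_{ik})$ while the expansion of $\cos(\omega_k t + \theta_{ik})$ produces $-\sin(\theta_{ik})$ as the coefficient of $\sin(\omega_k t)$; since the same sign convention is applied to $\mathbf{B}$, this amounts to composing $\mathcal{F}$ with a diagonal $\pm 1$ matrix, which is orthogonal and leaves your Gram-matrix computation unchanged, but it deserves a sentence so the identification $\widetilde{\Omega}_q = \mathcal{F}\Omega_q$ is literally true.
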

\begin{proof}[Proof]
The detailed proof is provided in Section S2 of the supplementary material. {The errors are only approximate as there is always a discretization error. This equivalence improves with a closer discretization approximation.}
\end{proof}
Theorem 2 demonstrates that the error of the decoupled reservoir in the time domain is approximately equal to the error in the frequency domain. Combining the results of Theorems 1 and 2, we obtain the relations,

\begin{equation} \label{err_rel}
\tilde{\epsilon}_q \approx \epsilon_q = \epsilon_r 
\end{equation}
which shows that, regardless of whether the regression problem is posed in the coupled or decoupled form, and in the time or frequency domain, the resulting error is essentially the same. 

While $\epsilon_r, \epsilon_q$, and $\tilde{\epsilon}_q$  are useful for theoretical analysis, in our numerical experiments we use the Normalized Root Mean Squared Error (NRMSE) to compare the performance of different reservoir architectures.
 \begin{equation} \label{NRMSE}
\mathrm{NRMSE} = \frac{\|\textbf{y} - \textbf{y}^{\text{fit}}\|}{\|\textbf{y}\|},
\end{equation}
where $\textbf{y}^{\text{fit}}$ is the best-fit approximation of the target signal and  $\textbf y$ is the true (measured or reference) signal. 
\subsection{Optimization}
Having established the relationships between the errors \eqref{err_rel}, we now propose an optimization framework to minimize the error of the linear reservoir computer by identifying an optimal set of eigenvalues $\lambda_i$ in Eqs.\ \eqref{decoupled_RC} and \eqref{individual reservior}. The primary goal of this procedure is to find these eigenvalues along with the output weights $\pmb{\kappa}$ that jointly minimizes the training error, as we explain below. 

From \eqref{transfer_f} we see that each mode acts as a low-pass filter. Thus, we require the cut-off frequencies to exceed the maximum frequency present in the input signal. From Eq.\ \eqref{transfer_f}, each mode has cut-off frequency $\omega_{c,i} = \gamma (1-\lambda_i)$. Hence, we impose $\omega_{c,i} > \omega_{\max} = \max_k \omega_k$ for all $i$, which yields the condition
\begin{equation}
\label{condit}
    \lambda_i < 1 - \frac{\omega_{\max}}{\gamma}, \quad \forall i.
\end{equation}

Given that the input signal $u(t)$ and the output signal $y(t)$ are defined in Eq.\ \eqref{input_signal} and Eq.\ \eqref{output_signal}, respectively, we now formulate the following optimization problem, incorporating the constraint in Eq.\ \eqref{condit}. 
\begin{subequations}\label{optim}
\begin{align}
\label{eq:19a}
    \min_{\lambda_1, \lambda_2, \ldots, \lambda_N, \pmb{\kappa}} \quad & \tilde{\pmb{\epsilon}}_q^\top \tilde{\pmb{\epsilon}}_q +\beta_1\tilde{\pmb{\kappa}}^\top\tilde{\pmb{\kappa}}+\beta_2 \left( \frac{1}{H} \right)\\
    \text{s. t.} \quad  
    & \tilde{\pmb{\epsilon}}_q = \tilde{\Omega} \tilde{\bm{\kappa}} - \tilde{y}, \\
    & \tilde{y} = [b_1 \cos(\phi_1), b_1 \sin(\phi_1), \nonumber\\ 
    &\phantom{\tilde{y} =}b_2 \cos(\phi_2),
    b_2 \sin(\phi_2), \ldots, \nonumber \\ &\phantom{\tilde{y} =} b_K \cos(\phi_K), 
    b_K \sin(\phi_K)]^\top, \\
    & \tilde{\Omega} = \tilde{\Omega}_q, \\
    & M_{ik} = \left|\frac{\gamma c_i}{j\omega_k-\gamma(\lambda_i-1)}\right| , \\
    & \theta_{ik} = \angle \frac{\gamma c_i}{j\omega_k-\gamma(\lambda_i-1)}, \\
    & \lambda_i \leq 0, \\
  & H = \frac{N}{\sum_{\substack{j,z=1 \\ j \neq z}}^{N} \frac{1}{|\lambda_j - \lambda_z|}},\\
  & \omega_{max} + \gamma(\lambda_i-1)\leq0
\end{align}
\end{subequations}

The cost function Eq.\ \eqref{eq:19a} includes three terms. The first term represents the reservoir's error in the frequency domain and serves as the primary quantity to be minimized. High readouts {weights} often indicate overfitting and can amplify noise in the reservoir states or inputs, leading to poor generalization and reduced robustness. To mitigate this, a second term is included in the cost function that minimizes the norm of the readouts {weights}, where 
$\beta_1 \geq 0$ is a  constant controlling the weight of this penalty. 

{Some comments should be made regarding the third term.}
The third term is the reciprocal of the harmonic mean of the differences between the eigenvalues, which is lower when {any of} the differences in eigenvalues are high. It is included in the cost function to enforce distinctness among the eigenvalues, and $\beta_2 \geq 0$ is the weight that controls the strength of this term. {Our analysis shows that identical eigenvalues would produce a computational redundancy and reduce the effective dimension of the reservoir. It was found that without the heuristic term, sometimes a numerical issue would create eigenvalue copies. Considering that this is undesirable, a heuristic algebraic barrier was included to keep the eigenvalues from being identical. The harmonic mean is a simple algebraic way of doing this because of the denominator of the differences of eigenvalues. When the eigenvalues are not near each other, the harmonic term does not contribute significantly and when the eigenvalues come close in value, the harmonic term becomes a significant contribution to the objective function. It is equivalent to a barrier method.}

\section{Numerical Verifications}
We validate the theoretical results through numerical simulations. Reservoir computer simulations are carried out in both MATLAB and Julia. The optimization problem is numerically solved using Ipopt, a nonlinear optimization solver, in combination with the JuMP modeling framework in Julia, following Algorithm \ref{optalg}, and the best performance is stored as the optimal solution. The optimization procedure in Algorithm~\ref{optalg} is nonconvex; therefore, it is executed multiple times with different initial values of $\lambda_0$. As the number of reservoir nodes increases, the dimensionality of the optimization problem also grows, which makes the search more challenging and necessitates a larger number of initial points to reliably approach an optimal solution.

Two strategies are used in parallel to generate the initial guesses. The first strategy is a warm-start approach, in which Algorithm~\ref{optalg} is repeatedly executed using various initial values of the eigenvalues $\boldsymbol{\lambda}_0$, which we draw from a uniform distribution in the interval 
{$[-19,1]$}. For each selected $\lambda_0$, the optimization is run once, and the resulting optimized value is then used as the initial guess for the next iteration to escape local minima. This process continues until the number of initial points drawn from the uniform distribution reaches the predefined maximum number for the warm-start optimization.
\color{black}
The vector  $\boldsymbol{\kappa}_0$ is then obtained using ridge regression, while $\boldsymbol{M}_0$, and $\boldsymbol{\theta}_0$ are computed according to Eq.\ \eqref{mag_phase}. (The implementation code is made publicly available and can be downloaded  { \href{https://github.com/SahandTerami/LinearReservoirOptimization/tree/main}{here}}) 
\begin{algorithm}[H]
\caption{Reservoir Eigenvalue Optimization}
\begin{algorithmic}[1]

\Require $ N, K, \beta_1, \beta_2, \gamma, \mathbf{a}, \mathbf{b}, \mathbf{\Phi}, \boldsymbol{\omega}, \mathbf{c}, \boldsymbol{\lambda}_0, \boldsymbol{\kappa}_0, \boldsymbol{M}_0, \boldsymbol{\theta}_0, warm\_start\_chain$
\Ensure Optimized $\boldsymbol{\lambda}, \boldsymbol{\kappa}, \boldsymbol{M},\boldsymbol{\theta}$, lowest\_error

\State $\mathbf{W} \leftarrow \frac{1}{\boldsymbol{\omega}}$
\State $lowest\_error \leftarrow \infty$
\State $num\_iter \leftarrow warm\_start\_chain$ (To escape sub-optimal local minima)

\For{$iter = 1$ to $num\_iter$}
    \State Cost = $\min \sum_{i=1}^{K} W_i \epsilon_{q,i}^2 + \beta_1 \sum_{j=1}^{N} \kappa_j^2 + \beta_2 (1/H)$
    \State Add nonlinear constraints
    \State Set initial values for $\lambda, \kappa, M, \theta$ from the previous iteration or initial guess.
    
    \State Solve the optimization problem and find $\boldsymbol{\lambda}_{new}, \boldsymbol{\kappa}_{new}, \boldsymbol{M}_{new},\boldsymbol{\theta}_{new}$
    \If{solver converged successfully}
        \State $\epsilon =   \sqrt{\sum_i W_i \epsilon_{q,i}^2}$
        \If{$\epsilon <lowest\_error$}
            \State $lowest\_error \leftarrow \epsilon$
            \State Store current $\boldsymbol{\lambda}_{new}, \boldsymbol{\kappa}_{new},\boldsymbol{M}_{new}, \boldsymbol{\theta}_{new}$ as the best solution.
        \EndIf
    \Else
        \State Keep previous $\boldsymbol{\lambda}_{best}, \boldsymbol{\kappa}_{best}, \boldsymbol{M}_{best},\boldsymbol{\theta}_{best}$ as the current solution.
    \EndIf
\EndFor

\State \Return $\boldsymbol{\lambda}_{best}, \boldsymbol{\kappa}_{best}, \boldsymbol{M}_{best},\boldsymbol{\theta}_{best}, lowest\_error$

\end{algorithmic}  \label{optalg}
\end{algorithm}
\subsection{Optimal Reservoir Computer}
To demonstrate the effectiveness of the proposed optimization framework, three numerical examples are considered for a reservoir computer governed by Eq.~\eqref{coupled_RC}, with number of nodes $N=10$, $N=100$, and $N=200$. In the three examples, the reservoir is trained for $T=3000$ steps with parameters $\beta_1 = 10^{-7}$, $\beta_2 = 10^{-1}$, and $\gamma = 6$, and the reservoir states together with the corresponding errors are computed using both analytical and numerical methods. The matrix $A$ is taken to be the adjacency matrix of a weighted/unweighted Erdos-Renyi graph with its spectrum shifted to ensure the Hurwitz property, and the input mask vector $\textbf{c}$ has entries randomly drawn from a Normal distribution. We set a larger number of warm-start guesses as the reservoir size grows, namely the warm-start count is set to 50 for the $N=10$ reservoir, 400 for $N=100$, and 1500 for $N=200$. In all cases, the length of each warm-start chain is fixed at 40. Consequently, the total number of initial guesses is equal to $50 \times 40 = 2000$ for $N=10$, $400 \times 40 = 16{,}000$ for $N=100$, and $1400 \times 40 = 60{,}000$ for $N=200$.
The input and output signals are defined as follows:
\[
u(t) = 1.1\cos\left( t\right)+ 1.7\cos\left(3 t\right)+ 2.1\cos\left( 5t\right)
\]
\[
  		y(t)= 2.2\cos\left(t-0.5\right)+\cos\left(3t+0.9\right)+1.6\cos\left(5t+1.1\right).
\]

The reservoir states and their corresponding prediction errors are computed both prior to and following the optimization process. By comparing the pre- and post-optimization results, we assess whether the optimization successfully reduced the observation error. 

Figure\ \eqref{example_perf}  demonstrates the performance of the reservoir computer in the training and testing phases, respectively, before and after the optimization. As shown in the figure, prior to optimization, the reservoir was unable to learn the underlying structure of the signal resulting in a large approximation error. In contrast, after the optimization, the reservoir successfully captured the signal dynamics and reproduced the target output with significantly improved accuracy. \\
\begin{figure}
    \centering
    \includegraphics[width=1\linewidth]{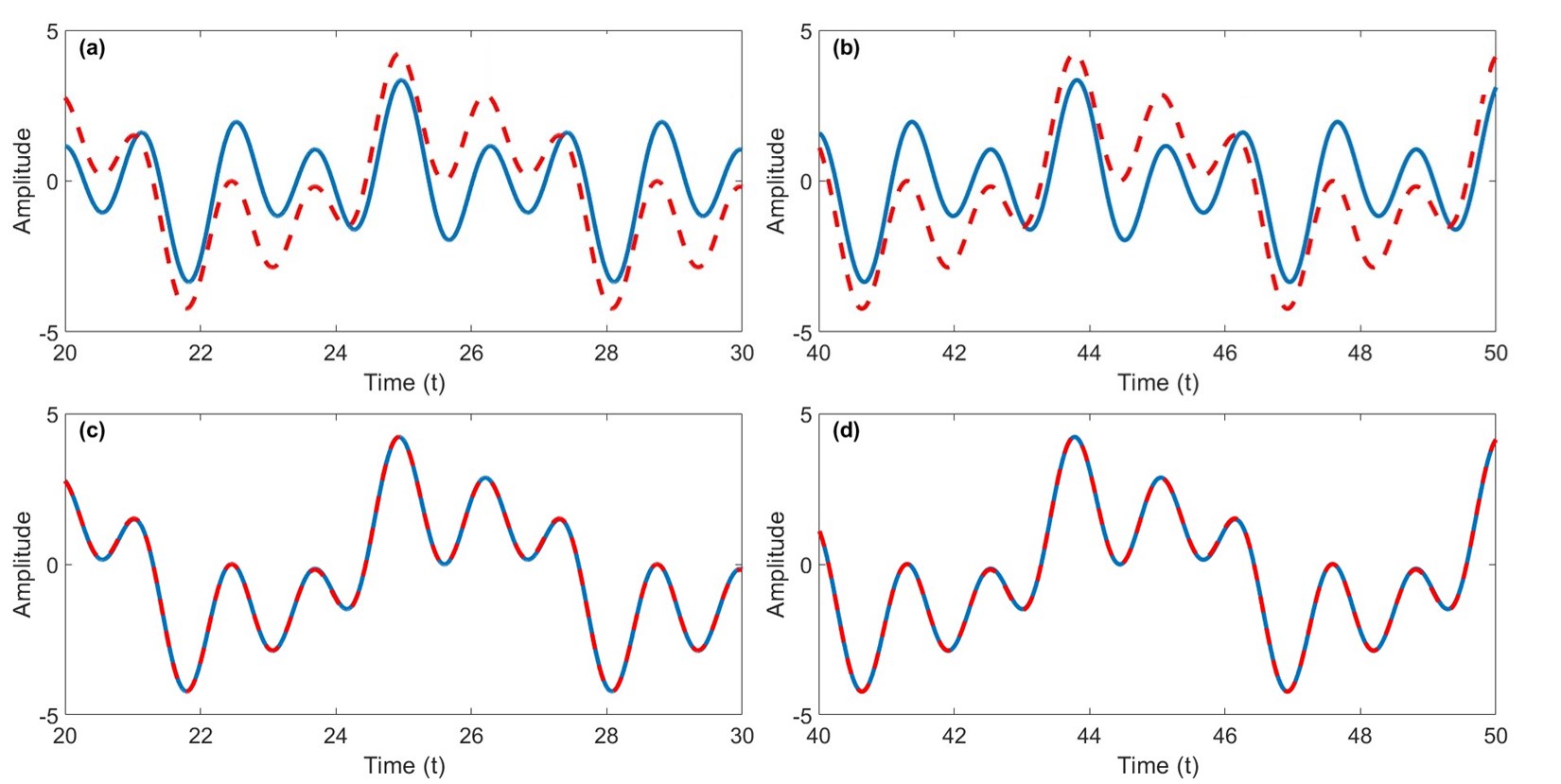}
    \caption{Performance of the 10-node reservoir. Red dashed curves represent the training signal, while blue curves correspond to the best fit produced by the reservoir: (a) Training before optimization, (b) Testing before optimization, (c) Training after optimization, (d) Testing after optimization. }
    \label{example_perf}
\end{figure}
The optimization method also scales to larger networks. Figure~\eqref{example_100_train} shows that, for a $N=100$ reservoir, the unoptimized network fails to learn the signal and exhibits large errors, while the optimized network accurately captures the dynamics and reproduces the target output.
\begin{figure}
    \centering
    \includegraphics[width=1\linewidth]{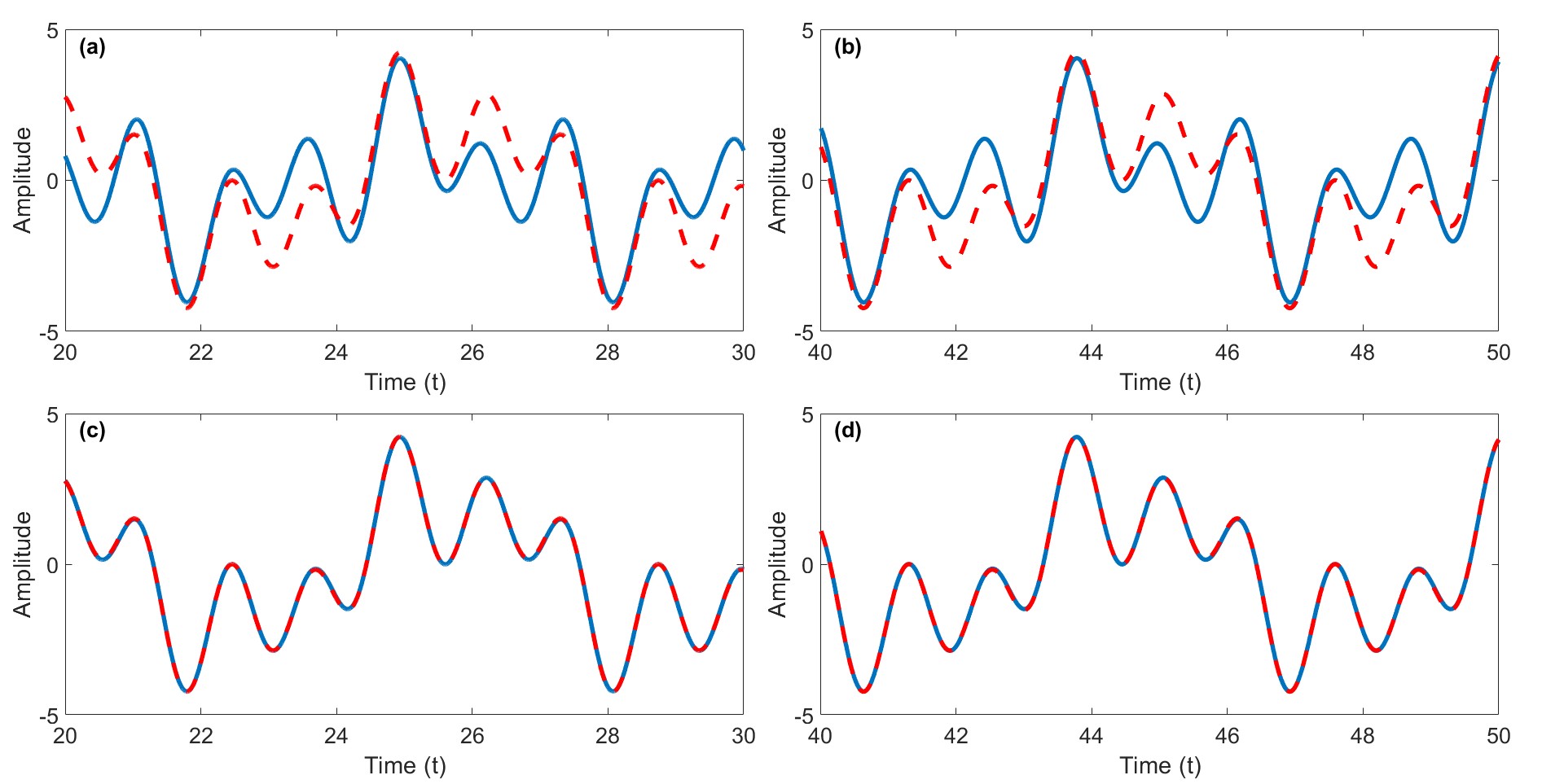}
    \caption{Performance of the 100-node reservoir. Red dashed curves represent the training signal, while blue curves correspond to the best fit produced by the reservoir: (a) Training before optimization, (b) Testing before optimization, (c) Training after optimization, (d) Testing after optimization.}
    \label{example_100_train}
\end{figure}
Figures  \eqref{example_perf} and \eqref{example_100_train} seem to indicate that the proposed optimization has significant potential for improving signal estimation accuracy. 
\begin{table}[h!]
\centering
\begin{tabular}{c c c c c}
\hline
\multicolumn{1}{c}{} & \multicolumn{4}{c}{Average NRMSE} \\ \hline
Network Nodes & \makecell{Training \\ Before Optimization} & \makecell{Training \\ After Optimization} & \makecell{Testing \\ Before Optimization} & \makecell{Testing \\ After Optimization} \\ \hline
10  & 0.7849 & 0.0058 & 0.6943 & 0.0054 \\
100 & 0.5786 & 0.0053 & 0.5197 & 0.0049 \\
200 & 0.5415 & 0.0049 & 0.4983 & 0.0047 \\ \hline
\end{tabular}
\caption{Average training and testing NRMSE before and after optimization for different network sizes. {The NRMSE values presented in the Table are averaged over 50 independent runs.}}
\label{tab:data_table}
\end{table}

The comparison of the average training and testing NRMSE of reservoir computers with 10, 100, and 200 nodes over 50 runs is presented in Table~\eqref{tab:data_table}. The results show that the optimized reservoirs consistently achieve substantially lower NRMSE across all network sizes. Moreover, increasing the number of nodes does not degrade performance; instead, the optimized reservoirs maintain strong accuracy as the number of nodes increases.

The proposed optimal {L}RC performs reliably not only on simple benchmark signals but also on signals derived from a nonlinear system. {As a typical nonlinear system, we use the Lorenz system. The Lorenz system is a set of three ordinary differential equations,
\begin{subequations}
\begin{align}
\dot{x} &= \sigma (y - x),\\
\dot{y} &= x (\rho - z) - y,\\
\dot{z} &= x y - \vartheta z
\end{align} 
\end{subequations}
originally used to model weather patterns but has since been used as a prototypical chaotic system and has been found to model many such systems \cite{chimal2025different}. The parameters $(\sigma = 10, \, \rho = 25, \textrm{and}  \, \vartheta = 8/3)$ are system parameters originally related to the fluid system ($\sigma$ is the Prandtl number, $\rho$ was the Rayleigh number, and $\vartheta$ was related to the fluid layer dimension), which we choose to put the Lorenz system into a chaotic regime. Other simulation parameters are provided in the Supplement.} Figure~\eqref{lorenz_6} demonstrates its performance in an observation task in which the $z$-state of a chaotic Lorenz oscillator is reconstructed from knowledge of its $x$-state. First, the $K=6$ dominant frequencies common to the $x$ and $z$ states are extracted using the method described in the section S4 of the supplement. The reservoir is then trained to estimate $z$ using only the observed $x$ data.

For this task, the non-optimal {L}RC achieves a training NRMSE of $0.9549$, whereas the optimal {L}RC reduces this error to $0.0141$. The test NRMSE follows the same trend: $0.9521$ for the non-optimal {L}RC and $0.0127$ for the optimal {L}RC. These results clearly highlight the significant performance advantage of the optimal {L}RC.

\begin{figure}
    \centering
    \includegraphics[width=1\linewidth]{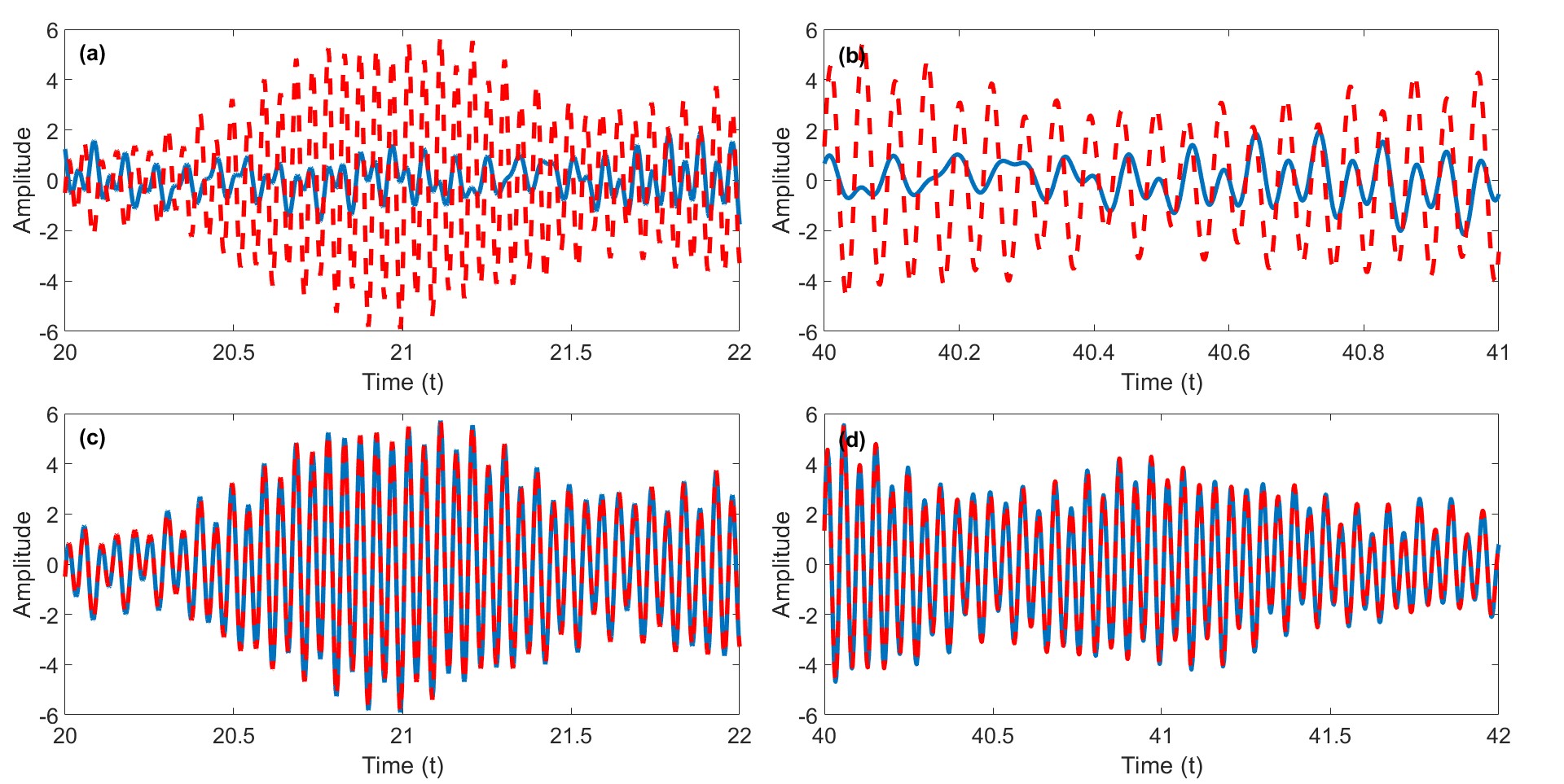}
    \caption{Performance of an $N=25$-node reservoir used to observe the dynamics of the chaotic Lorenz system. Both the input and training signals are restricted to the 
$K=6$ dominant common frequencies. Red dashed curves represent the training signal, while blue curves correspond to the best fit produced by the reservoir: (a) Training before optimization, (b) Testing before optimization, (c) Training after optimization, (d) Testing after optimization.}
    \label{lorenz_6}
\end{figure}

\color{black}
\subsection{Performance Relative to Existing Approaches}

{To evaluate the performance of the proposed optimal LRC, it is compared with two benchmark nonlinear reservoir computers reported in the supplementary material}, using two activation functions: \texttt{tanh} and \texttt{ReLU}, as well as with a linear reservoir computer benchmark. Fig.\ \eqref{comparison} presents the testing NRMSE of the optimized reservoir alongside the three benchmarks, under two scenarios. In the first scenario, the number of reservoir nodes is fixed at $N=50$, while the number $K$ of input frequencies varies; in the second, the number of input frequencies is fixed at $K=10$, while the number of reservoir nodes $N$ varies. Together, these scenarios provide a structured basis for evaluating the relative performance of the proposed optimal LRC across changes in the network size and input complexity.

\begin{figure}
    \centering
    \includegraphics[width=1\linewidth]{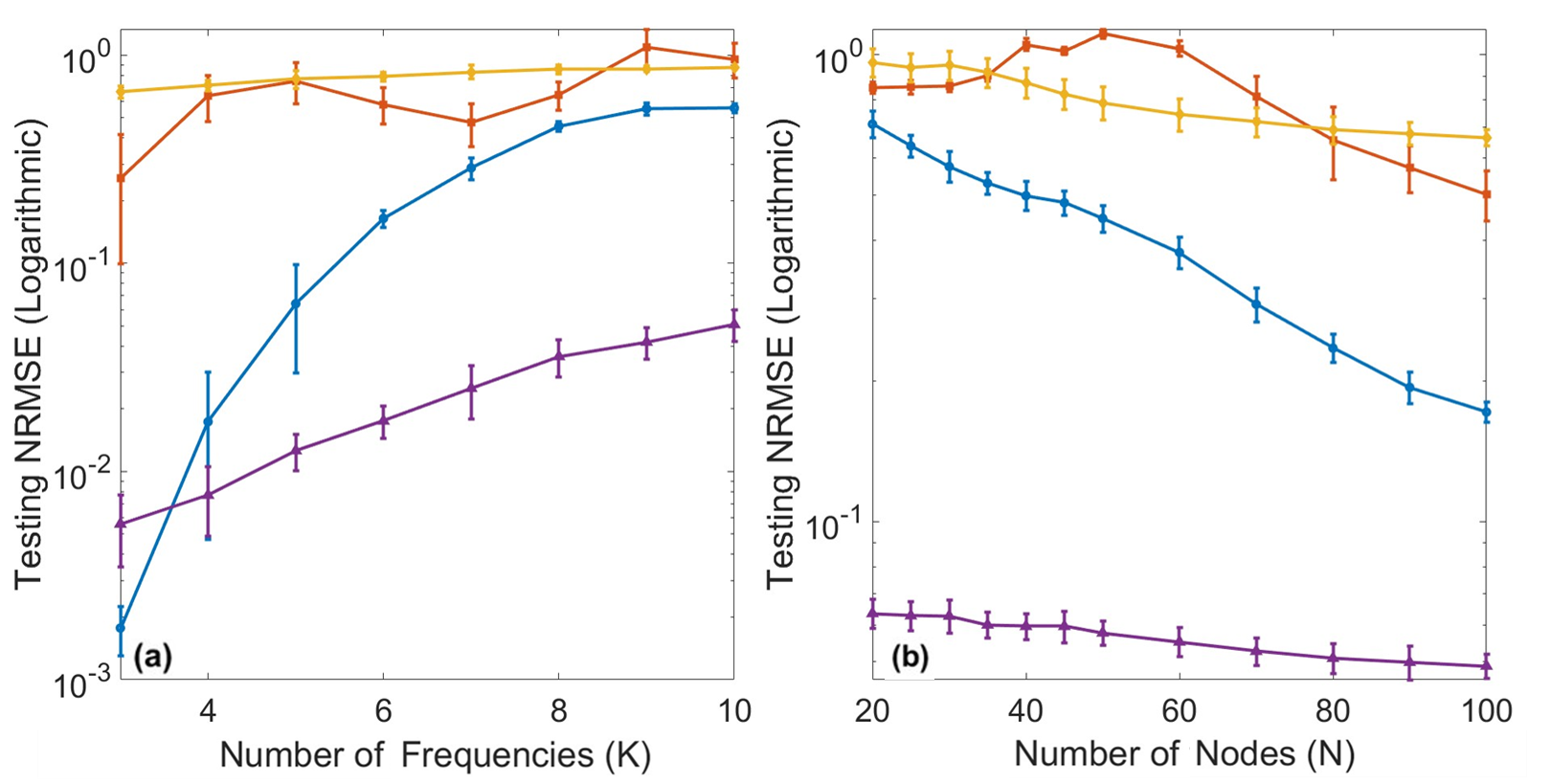}
    \caption{The figures show the average normalized root mean squared testing error (NRMSE) of reservoir computers with their corresponding standard deviations as we vary either the number of input frequencies $K$ or the number of nodes $N$. {Mean and standard deviations are computed from 20 independent runs.} The yellow curve represents the NRMSE of the linear benchmark, the orange curve corresponds to the nonlinear RC with \texttt{ReLU} activation, the blue curve shows the nonlinear RC with \texttt{tanh} activation, and the purple curve indicates the optimized LRC. a) The number of nodes is fixed at $N=50$ while the number of input frequencies varies. b)  The number of input frequencies is fixed at $K=10$ while the number of reservoir nodes varies. The duration of the testing phase is taken to be one third of the duration of the training phase.}
    \label{comparison}
\end{figure}

Figure \eqref{comparison}(a) shows that the 
{lowest testing error is achieved either by the \texttt{tanh}-based nonlinear RC in the case of a small number of frequencies $K$ (blue curve) or by the optimized LRC for larger values of $K$ (purple curve). The linear RC benchmark (yellow curve) and the nonlinear RC with \texttt{ReLU} activation (orange curve) appear to underperform the other two methods over the entire range of the frequencies we have tested.} Figure \eqref{comparison}(b) shows that the 
{optimized LRC (purple curve) achieves the lowest testing error, whereas the \texttt{tanh}-based nonlinear RC (blue curve) shows a greater reduction in error as the number of nodes increases. On the other hand, the linear RC benchmark (yellow curve) and the nonlinear RC with \texttt{ReLU} activation (orange curve) both appear to underperform the other two methods over the entire range of $N$ we have tested. An important observation from Fig.\ \eqref{comparison}(b) is that the optimized LRC achieves significantly lower NRMSE with far fewer nodes compared to the \texttt{tanh}-based RC. This reduction in required network size directly makes the proposed optimized LRC more efficient for real-time and resource-constrained applications.}

{
Standard RCs with a \texttt{tanh} or \texttt{ReLU} reservoir are computationally fast to train, but their performance is limited by the randomly initialized reservoir. In contrast, the optimized linear RC entails a substantially higher computational cost during the optimization phase, as the underlying problem is nonconvex and highly sensitive to the initial guess (see Table~6 in the Supplementary Material). Achieving a near-optimal solution therefore requires evaluating multiple initializations, with the number of required guesses increasing with the network size, which explains the long runtimes observed. For a given task, however, this optimization is performed only once; once the reservoir has been optimized, the associated computational cost does not need to be incurred again.}

{{Despite this,} the optimized RC achieves lower estimation errors and attains a given level of performance with fewer nodes (reducing memory and deployment runtime). Since the optimization is performed offline, the resulting reservoir can be used efficiently in real-time or resource-constrained applications.}

\subsection{Effect of the Cost Function Components on the Optimization Performance}
The optimization formulation, as shown in Eq.\ \eqref{optim}, consists of three terms. The second and third terms have associated weights $\beta_1 \geq 0$ and $\beta_2 \geq 0$, respectively. Therefore, it is important to analyze the impact of these coefficients on the overall optimization process. Figure \eqref{Sensit}  (a) demonstrates the effect of the second and third terms' weights on the optimization of the error. Figure \ref{Sensit} (a), shows that reducing the weight of the second term ($\beta_1$) leads to a decrease in optimization error. This suggests that allowing larger output weights enables the reservoir to more effectively learn the signal, thereby improving performance. Additionally, increasing the weight of the third term ($\beta_2$) also leads to a reduction in error. This is because the third term{, a heuristic algebraic term, } encourages the optimization to produce distinct eigenvalues, which is a desirable property. {We see that $\beta_1$ has a more substantial impact on optimization. Additionally, considering the heuristic role of $\beta_2$, its contribution should only be effective if it is nonzero and not too large. This balance can be seen in the reduction in error near the bottom right of Figure \eqref{Sensit} (a).} 
\subsection{Effect of $\lambda$ on Reservoir Performance}
Performing a sensitivity analysis of the reservoir’s eigenvalues is important for assessing both robustness and generalization. High sensitivity means that small perturbations in eigenvalues can lead to large errors, making the system fragile. It may also indicate overfitting, where the reservoir performs well only under narrowly tuned conditions. Evaluating sensitivity helps ensure stable and reliable performance. Since the optimization formulation focuses on minimizing the error by tuning the reservoir’s eigenvalues and output weights, it is essential to understand how sensitive the reservoir computer is to the perturbations applied to the optimal eigenvalues. In order to perform the sensitivity analysis, a small perturbation  is added to each one of the eigenvalues, 
\begin{equation} \label{pert}
\lambda_{perturbed } = \lambda_{opt}+\rho \delta, 
\end{equation}
where $\delta$ is a number randomly chosen from a Normal distribution and $\rho\geq 0$ is a tunable parameter.
\begin{figure}[ht]
    \centering
    \includegraphics[width=1\linewidth]{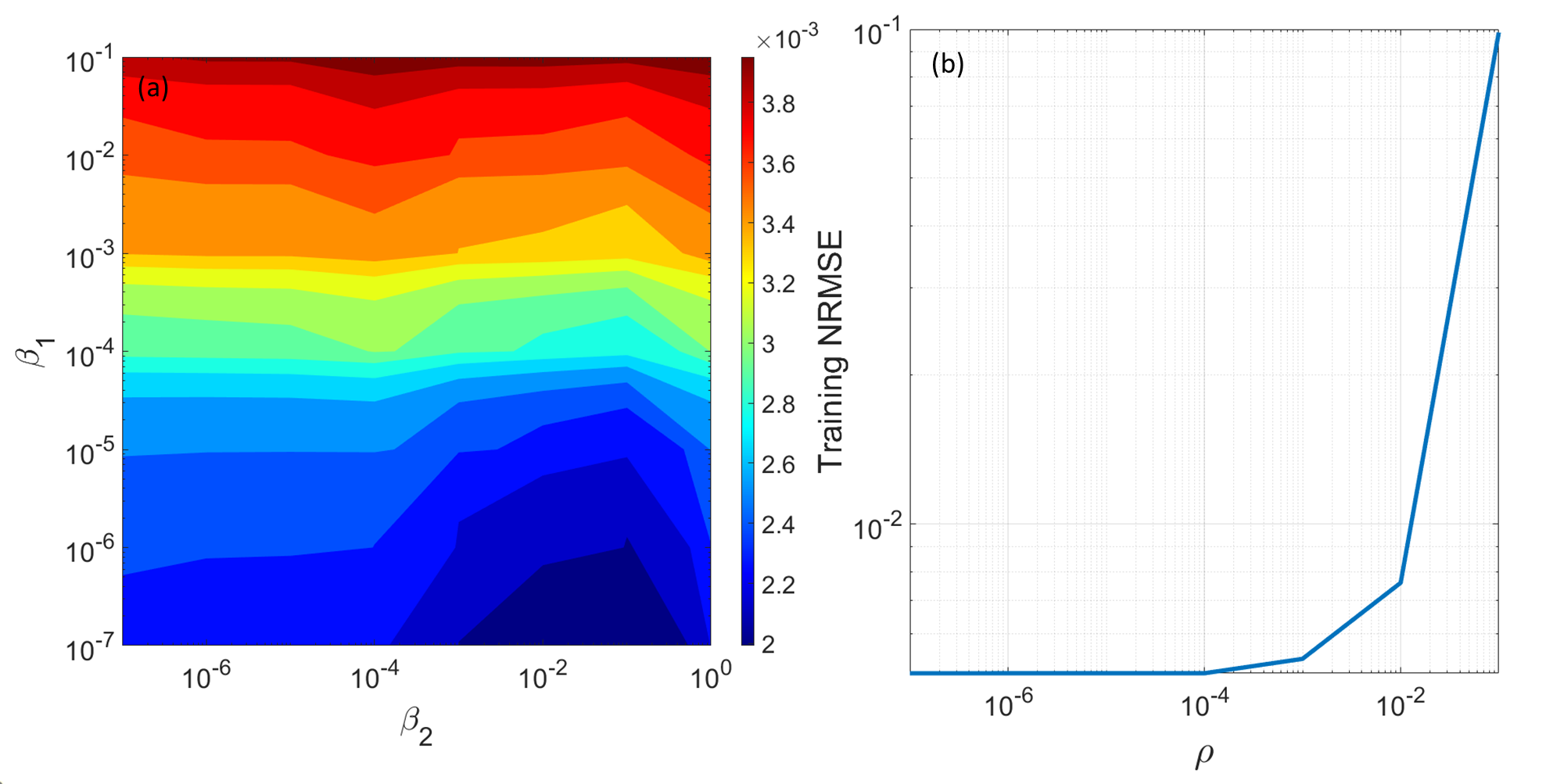}
    \caption{a) Average reservoir training NRMSE as a function of $\beta_1$ and $\beta_2$ for an $N=10$-node reservoir with $\gamma = 5.5$. Each plotted data point, corresponding to a different selection of $\beta_1$ and $\beta_2$, is an average over 10 different runs. b) Sensitivity of the reservoir training NRMSE to perturbations applied to the optimal eigenvalues, based on Eq.\ \eqref{pert}. Each plotted data point is an average over 10 different runs.}
    \label{Sensit}
    \end{figure}
Figure \eqref{Sensit} (b) indicates that the optimal LRC {training} NRMSE  remains largely unaffected by small perturbations applied to the optimal eigenvalues, ie., for $\rho$ that does not exceed a threshold value. Only significant deviations from these optimal values cause noticeable increases in the NRMSE. This behavior suggests that the reservoir is robust and stable around the optimized eigenvalues, making it resilient to minor noise or implementation inaccuracies. Consequently, the optimized reservoir parameters provide a reliable solution, ensuring consistent performance even when slight variations occur in practice.

\subsection{Effect of the Number of the Training Steps $T$ on Reservoir Performance}
A notable advantage of reservoir computing  is its ability to learn from signals of short duration, which reduces the required data length and decreases the overall computational effort during training. Consequently, it is important to evaluate how the duration of the training  phase influences the performance of the optimized reservoir computer. To this end, we have assessed the performance of the optimized LRC using multiple numbers of training steps $T$, and the resulting performance metrics are reported in Table~\ref{tab:my_table}.
The NRMSE values presented in the Table are averaged over 50 independent runs.

\begin{table}[h!] \centering \begin{tabular}{c c c} \hline Training Steps & Avg. Training NRMSE & Avg. Test NRMSE \\ \hline 
500 & 0.0071 & 0.0067 \\ 
1000 & 0.0061 & 0.0059 \\ 
2000 & 0.0059 & 0.0056 \\ 
3000 & 0.0058  & 0.0054 \\ \hline \end{tabular} \caption{Average training and test NRMSE for different numbers of training steps. {The NRMSE values presented in the Table are averaged over 50 independent runs.}} \label{tab:my_table} \end{table}

From Table~\ref{tab:my_table} we see that as the number of training steps $T$ decreases, the NRMSE increases slightly; however, this increase is not significant and can be considered negligible. This indicates that the proposed optimized RC is robust to variations in the duration of the training-phase, while maintaining accurate performance.

\color{black}
\section{Conclusions}
This work makes two
main contributions. First, we demonstrate that the dynamics of a linear RC can be decoupled into a  number of independent modes in a way that {left} the training error 
(and the testing error) unchanged. Second, we show that each of these
modes can be individually optimized for a specific task when the decoupled RC is
reformulated from the time domain into the frequency domain. This reformulation yields a much smaller {state} matrix, since 
the number of input frequencies is typically much smaller than the number of time steps. 
In so doing, we show that the effect of the connectivity of the LRC on the training and testing error is completely captured by the eigenvalues of the reservoir adjacency matrix, which allows us to formulate an optimization problem in the eigenvalues only. 

{A common practice in reservoir computing is to tune the spectral radius by uniformly rescaling all the eigenvalues of the adjacency matrix
\cite{jaeger2001echo}. Our method can be seen as an extension of this technique, which involves fine-tuning each individual eigenvalue.}
Simulations on networks of varying sizes demonstrate that the optimized LRC significantly outperforms randomly constructed reservoirs in both the training and testing phases. 
Moreover, comparisons with established RC benchmarks indicate that the optimized reservoir not only achieves improved performance but also exhibits greater {robustness, scalability, and modularity.} 
{ A limitation of our approach is that the optimization of the eigenvalues is non-convex, making the results highly dependent on the choice of initial guesses and computationally costly. In this work, the initial guesses were chosen randomly, which increases computational time and can lead to sub-optimal results. Identifying a more suitable choice of the initial guess remains an open challenge for future work. Although  not addressed here, our work can be generalized to the case of next generation reservoir computing of Ref.\ \cite{Gauthier2021}}

\color{black}
\section*{Acknowledgement}
We would like to acknowledge generous help from Dr. Saif R. Kazi and Dr. Amir Nazerian. F.S. was supported by grant AFOSR FA9550-24-1-0214.

\end{document}